\newtheorem{lemma}{\text{Lemma}}
\newtheorem{proof}{\text{Proof}}
\def\BibTeX{{\rm B\kern-.05em{\sc i\kern-.025em b}\kern-.08em
    T\kern-.1667em\lower.7ex\hbox{E}\kern-.125emX}}
\begin{document}

\title{RSRP Measurement Based Channel Autocorrelation Estimation for IRS-Aided Wideband Communication \\
\thanks{This work has been submitted to the IEEE for possible publication. Copyright may be transferred without notice, after which this version may no longer be accessible.}
}

\author{\IEEEauthorblockN{He Sun\IEEEauthorrefmark{1},
                     Lipeng Zhu\IEEEauthorrefmark{1},
                     Weidong Mei\IEEEauthorrefmark{4},
                     and Rui Zhang\IEEEauthorrefmark{2}\IEEEauthorrefmark{3}\IEEEauthorrefmark{1}}
\IEEEauthorblockA{\IEEEauthorrefmark{1}Department of Electrical and Computer Engineering, National University of Singapore, Singapore 117583.}
\IEEEauthorblockA{\IEEEauthorrefmark{2}School of Science and Engineering, Shenzhen Research Institute of Big Data.}
\IEEEauthorblockA{\IEEEauthorrefmark{3}Chinese University of Hong Kong, Shenzhen, Guangdong, China 518172.}
\IEEEauthorblockA{\IEEEauthorrefmark{4}National Key Laboratory of Wireless Communications, \\
University of Electronic Science and Technology of China, China 611731. \\
E-mail: {sunele@nus.edu.sg, zhulp@nus.edu.sg, wmei@uestc.edu.cn, elezhang@nus.edu.sg}}                    }

\vspace{-19.6pt}

\maketitle

\begin{abstract}
The passive and frequency-flat reflection of IRS, as well as the high-dimensional IRS-reflected channels, have posed significant challenges for efficient IRS channel estimation, especially in wideband communication systems with significant multi-path channel delay spread. To address these challenges, we propose a novel neural network (NN)-empowered framework for IRS channel autocorrelation matrix estimation in wideband orthogonal frequency division multiplexing (OFDM) systems. This framework relies only on the easily accessible reference signal received power (RSRP) measurements at users in existing wideband communication systems, without requiring additional
pilot transmission. Based on the estimates of channel autocorrelation matrix, the passive reflection of IRS is optimized to maximize the average user received signal-to-noise ratio (SNR) over all subcarriers in the OFDM system. Numerical results verify that the proposed algorithm significantly outperforms existing power-measurement-based IRS reflection designs in wideband channels.
\end{abstract}

\begin{IEEEkeywords}
Wideband communications, intelligent reflecting surface, RSRP, channel autocorrelation estimation
\end{IEEEkeywords}

\section{Introduction}
Intelligent reflecting surfaces (IRSs) have developed as a promising technology for creating cost-effective and adaptable wireless communication environments \cite{wu2023intelligent}. An IRS typically consists of a planar array of numerous quasi-passive, low-cost reflective elements that can tune the phase shifts and amplitudes of incoming signals.
To effectively harness the potential benefits of IRS for channel reconfiguration, IRS passive reflection/beamforming should be properly designed\cite{wu2023intelligent,mei}. Most of the existing IRS passive reflection designs require channel state information (CSI) or dedicated pilot training. For example, the IRS channels are estimated explicitly or implicitly by exploiting the received complex-valued pilots for IRS reflection optimization\cite{wu2023intelligent,mei,Zhengbx2020,YWJSAC}. However, the pilot signals in existing communication systems (e.g., cellular or WiFi) are designed only for estimating the base station (BS)-user direct channels, with no additional resources allocated for IRS-related channel estimation. As a result, pilot-based IRS channel estimation and reflection pattern optimization would require substantial modifications to current communication protocols, posing significant challenges for practical implementation.

\begin{figure}[t]
  \centering   {\includegraphics[width=0.4136\textwidth]{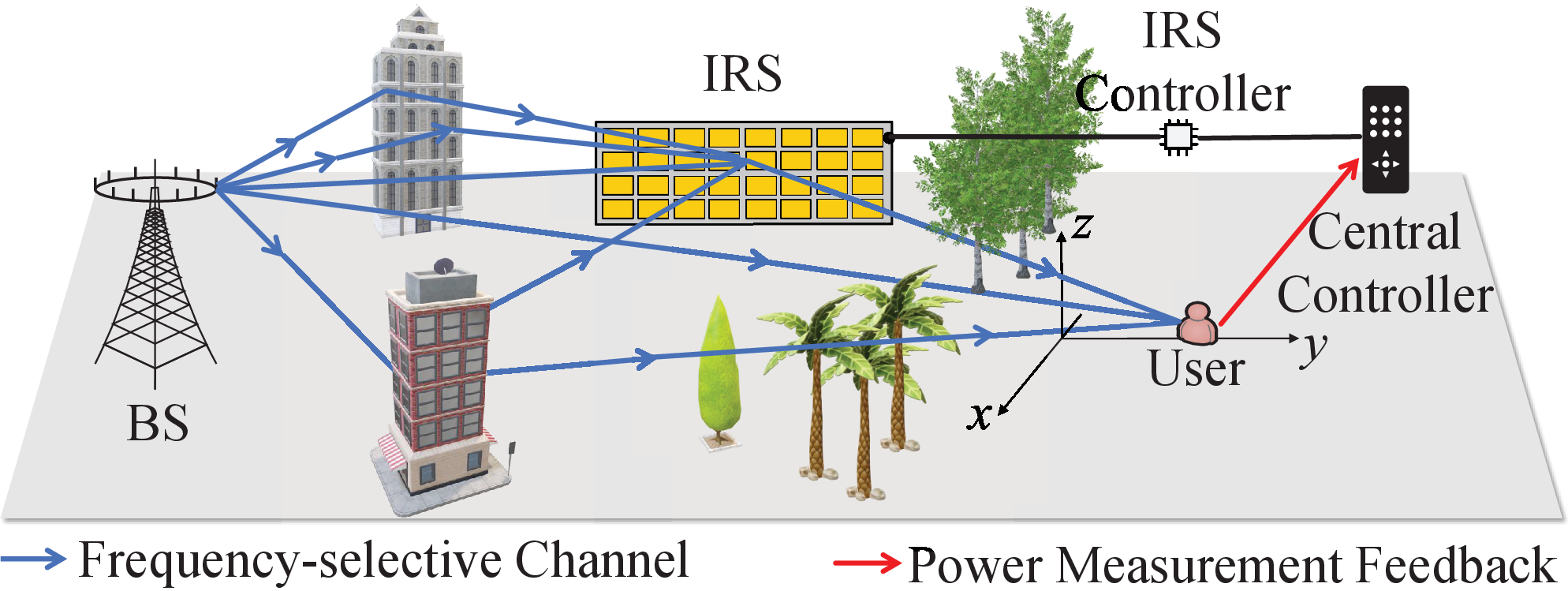}}
  \caption{An IRS-assisted wideband OFDM system. }
\label{fig0001}
\vspace{-16pt}
\end{figure}

To enable seamless integration of IRS into existing wireless communication systems, recent studies \cite{CSM,sun2024power} have explored practical IRS reflection designs by leveraging users' received signal (pilot/data) power measurements readily available in current communication systems, which are thus compatible with current cellular/WiFi protocols. It was shown that the conditional sample mean (CSM) method \cite{CSM}, with a sufficiently large number of reference signal received power (RSRP) measurement samples, can achieve the same signal-to-noise ratio (SNR) scaling order as under perfect CSI. However, CSM generally requires substantial power measurements due to the absence of channel recovery from power measurements. To address this, the authors in \cite{sun2024power} introduced channel recovery methods for power-measurement-based IRS reflection design. These studies, however, focus only on frequency-flat narrowband systems, leaving the more complex frequency-selective wideband channels unexplored. In wideband systems, IRS channel recovery is more challenging due to the increased channel coefficients caused by multi-path delay spread and their more intricate effects on RSRP measurements.

To tackle the above challenges, we propose a user power-/RSRP-measurement-based IRS channel estimation and reflection design framework for the wideband orthogonal frequency division multiplexing (OFDM) wireless system, as shown in Fig. \ref{fig0001}. First, we derived the user's average received signal power over all OFDM subcarriers in terms of the IRS channel autocorrelation matrix for the IRS-cascaded wideband channel, showing that it could be efficiently recovered based on the user's RSRP measurements over only a subset of subcarriers. To estimate this wideband channel autocorrelation matrix, we propose a single-layer neural network (NN)-enabled method utilizing the user's RSRP measurements. In particular, we reveal that under any given IRS passive reflection, the RSRP can be expressed as the output of a NN, which is the sum of the outputs of multiple single-layer subnetworks. Thus, the channel autocorrelation matrix can be recovered from the weights of the subnetworks trained via supervised learning. With the estimated channel autocorrelation matrix, the passive reflection coefficients of IRS are then optimized for maximizing the average received SNR over all OFDM subcarriers. Extensive simulation results verify that our proposed methods can outperform other benchmark schemes under wideband channels with significantly reduced power-measurement overhead and approach the average SNR upper bound assuming perfect CSI.

\emph{Notations}: $\boldsymbol{I}_M$ stands for the identity matrix of size $M \times M$, and $\textbf{0}_{M}$ denotes an all-zero vector of dimension $M$. $\mathbb{C}^{n \times m}$ and $\mathbb{R}^{n \times m}$ stand for the sets of the complex-valued and the real-valued matrices with size of $n \times m$, respectively. $\mathbb{N}$ stands for the set of nonnegative integers. $\left|\cdot\right|$ represents the cardinality of a set or the absolute value of a complex scalar. $\left\|\cdot\right\|$ stands for the Euclidean norm of a vector, and $\left\|\cdot\right\|_F$ represents the Frobenius norm of a matrix. $\Re\left(\cdot\right)$ and $\Im\left(\cdot\right)$ stand for the real part and the imaginary part of a complex-valued number/vector, respectively. $\jmath = \sqrt{-1}$ denotes the imaginary unit. ${\cal CN}{\left(\boldsymbol{0}_M,\sigma^2\boldsymbol{I}_M\right)}$ stands for the distribution of a circularly symmetric complex Gaussian (CSCG) random vector that has mean $\boldsymbol{0}_M$ and covariance $\sigma^2\boldsymbol{I}_M$. For a vector/matrix, $\left(\cdot\right)^T$ and $\left(\cdot\right)^H$ represent its transpose and its conjugate transpose, respectively. $\text{rank}(\cdot)$ computes the rank of a matrix, and $\text{Tr}(\cdot)$ returns the trace of a square matrix. $\boldsymbol{R}\succeq0$ indicates that $\boldsymbol{R}$ is positive semi-definite (PSD). $\text{diag}\left(\boldsymbol{x}\right)$ denotes a square diagonal matrix with vector $\boldsymbol{x}$ denoting the entries on its main diagonal. $\mathbb{E}\left[\cdot\right]$ represents the expectation of a random variable/vector.

\section{System Model}\label{Sec002}

As depicted in Fig. \ref{fig0001}, the IRS with $N$ reflecting elements is deployed to facilitate the wideband downlink transmission in an OFDM system. A single-antenna BS (or a multi-antenna BS with fixed transmit precoding) sends downlink messages to a single-antenna user. A central controller (e.g., the BS or a fusion center) exchanges control messages with the smart controller of the IRS to dynamically adjust the IRS reflection coefficients. Let $v_n = e^{\jmath \theta_n}, \ n \in {\cal N} \triangleq \left\{1,2, \cdots , N\right\}$, denote the $n$-th IRS reflecting element with $\theta_n$ denoting its phase shift. To simplify the hardware design and reduce costs, the phase shift of each IRS reflecting element is limited to a predefined finite set of values, uniformly distributed across the full phase range $\left(0,2\pi\right]$, i.e., ${\theta}_n \in \Phi_\mu \triangleq \{\omega, {2\omega}, \cdots ,2^\mu{\omega}\}, n \in {\cal N}$,
with $\mu$ denoting the number of IRS discrete phase-shift controlling bits, and $\omega \triangleq \frac{2\pi}{2^\mu}$.

In this paper, we assume quasi-static block-fading model for all links, which remain approximately unchanged within each channel coherence block\cite{Zhengbx2020,wu2023intelligent,mei}. For the considered OFDM system, let $K_1$, $K_2$ and $K_3$ denote the maximum number of multi-path delay taps of the BS-user, BS-IRS and IRS-user channels, respectively.
The baseband equivalent BS-user channel, that from the BS to the $n$-th IRS reflecting element and that from the $n$-th IRS reflecting element to the user are denoted by $\boldsymbol{\bar{f}} = \left[ f_{1},f_{2}, \cdots , f_{K_1}\right]^T \in \mathbb{C}^{K_1 \times 1}$, $\boldsymbol{q}_n = \left[ q_{n,1},q_{n,2}, \cdots , q_{n,K_2}\right]^T \in \mathbb{C}^{K_2 \times 1}$ and $\boldsymbol{b}_n = \left[ b_{n,1},b_{n,2}, \cdots , b_{n,K_3}\right]^T \in \mathbb{C}^{K_3 \times 1}$, respectively. Accordingly, the time-domain cascaded BS-IRS-user channel (without involving the effect of IRS's phase shifts ${v}_n$) via the $n$-th IRS reflecting element can be computed as $\boldsymbol{\bar g}_n = \boldsymbol{q}_n * \boldsymbol{b}_n \in \mathbb{C}^{K_r \times 1}$, where $K_r = K_2+K_3-1$ denotes the number of delay taps for the IRS cascaded channel. Define $K \triangleq \max\left({K_1,K_r}\right)$ as the maximum number of delay taps for both the BS-user direct channel and the cascaded BS-IRS-user channel. Define $M$ as the number of OFDM subcarriers with $M \gg K$. Let $\boldsymbol{f} = \left[ \boldsymbol{\bar{f}}^T, \textbf{0}_{M-K_1}^T\right]^T \in \mathbb{C}^{M \times 1}$ and $\boldsymbol{g}_n = \left[ {\boldsymbol{\bar g}}_n^T, \textbf{0}_{M-K_r}^T\right]^T \in \mathbb{C}^{M \times 1}$ denote the zero-padded BS-user direct channel and cascaded BS-IRS-user channel via the $n$-th IRS reflecting element, respectively. As such, the superimposed channel impulse response (CIR) of the wideband channel is given by
\begin{equation}
  \tilde{\boldsymbol{h}} = \sum_{n=1}^{N}{v_n\boldsymbol{g}_n} + \boldsymbol{f}.
\end{equation}
Let $\boldsymbol{G} = \left[\boldsymbol{f}, \boldsymbol{g}_1,\boldsymbol{g}_2,\cdots,\boldsymbol{g}_N\right]\in \mathbb{C}^{M \times \left(N+1\right)}$ denote the CIR matrix of the wideband channel, and $\boldsymbol{v} \triangleq \left[1, v_1, v_2, \cdots , v_N\right]^T$ denote the extended IRS passive reflection vector. As such, the above CIR can be rewritten by $\tilde{\boldsymbol{h}} = {\boldsymbol{G}{\boldsymbol{v}}}$.
Based on the CIR of the wideband channel, the channel frequency response (CFR) over all of the $M$ subcarriers is thus given by
\begin{equation}\label{Eqs00206}
  \boldsymbol{h} = \boldsymbol{F}_M\tilde{\boldsymbol{h}} = \boldsymbol{F}_M{\boldsymbol{G}{\boldsymbol{v}}},
\end{equation}
where $\boldsymbol{F}_M$ denotes the discrete Fourier transform (DFT) matrix of size $M \times M$. For convenience, an equal power allocation is applied over the $M$ subcarriers, and the baseband received signal in the frequency-domain is given by $\boldsymbol{y} = \boldsymbol{X}\boldsymbol{h} + \boldsymbol{z}$, where $\boldsymbol{y} \triangleq \left[y_1,y_2, \cdots ,y_M\right]^T \in \mathbb{C}^{M \times 1}$ with $y_m$ denoting the received signal at the $m$-th subcarrier, $\boldsymbol{X}=\text{diag}\left(\boldsymbol{x}\right) \in \mathbb{C}^{M \times M}$ denotes the diagonal matrix of an OFDM symbol $\boldsymbol{x} \in \mathbb{C}^{M \times 1}$ with $\mathbb{E}\left[\left\|\boldsymbol{x}\right\|^2\right]=P$ denoting the BS's transmit power, and $\boldsymbol{z} \triangleq \left[z_1,z_2,\cdots,z_M\right]^T \in \mathbb{C}^{M \times 1} \sim {\cal CN}{\left(\textbf{0}_M,\sigma^2\boldsymbol{I}_M\right)}$ represents the receiver noise vector with $\sigma^2$ denoting the average noise power. Sicne $\boldsymbol{z}$ is independent of both $\boldsymbol{X}$ and $\boldsymbol{h}$, the average received signal power over $M$ subcarriers is given by
\begin{equation}\label{Eqs00208}
\begin{split}
  {p}\left(\boldsymbol{v}\right) = \frac{1}{M}\mathbb{E}\left[\left\|\boldsymbol{X}{\boldsymbol{h}}\right\|^2\right] + \sigma^2.
\end{split}
\end{equation}
By substituting (\ref{Eqs00206}) into (\ref{Eqs00208}), we have
\begin{align}\label{Eqs002010}
{p}\left(\boldsymbol{v}\right) & = \frac{1}{M}\mathbb{E}\left[{\boldsymbol{v}^H}{\boldsymbol{G}^H}\boldsymbol{F}_M^H\boldsymbol{X}^H\boldsymbol{X}\boldsymbol{F}_M{\boldsymbol{G}{\boldsymbol{v}}}\right] + \sigma^2
 \nonumber \\
  & = {\boldsymbol{v}}^H{\boldsymbol{R}}{\boldsymbol{v}} + \sigma^2,
\end{align}
where we have utilized the fact that $\mathbb{E}\left[\boldsymbol{F}_M^H\boldsymbol{X}^H\boldsymbol{X}\boldsymbol{F}_M\right] = P \boldsymbol{I}_M$, and ${\boldsymbol{R}}\triangleq \frac{P}{M}{\boldsymbol{G}}^H\boldsymbol{G} \in \mathbb{C}^{(N+1) \times (N+1)}$ denotes the autocorrelation matrix of $\boldsymbol{G}$ (scaled by the power factor $P/{M}$). Note that the channel autocorrelation matrix ${\boldsymbol{R}}$ is Hermitian and PSD (i.e., ${\boldsymbol{R}} \succeq 0$).

In this paper, we aim to optimize the IRS reflection to maximize the average received SNR over all subcarriers at the user in the wideband system subject to the discrete IRS phase shift constraints. Accordingly, the IRS reflection design problem is formulated as
\begin{subequations}\label{Eqs007}
\begin{align}
  (\text{P1}): \ & \max_{\boldsymbol{v}}\frac{{{\boldsymbol{v}}^H{\boldsymbol{R}}{\boldsymbol{v}}}}{\sigma^2} \label{00701} \\
    & \text{s.t.} \ {\theta}_n \in \Phi_\mu, n \in {\cal N}. \label{00702}
\end{align}
\end{subequations}
Acquiring an accurate channel autocorrelation matrix $\boldsymbol{R}$ for solving (P1) is practically challenging due to the limited signal processing capabilities of the IRS and the high-dimensional channel parameters in IRS-aided wideband systems. To tackle this problem, we propose a NN-based IRS channel autocorrelation estimation method by leveraging user power measurements for wideband IRS reflection design, as detailed next.

\section{NN-enabled Channel Estimation Based on RSRP Measurements }\label{Sec003}

\subsection{RSRP Measurement }

\begin{figure}[t]
  \centering
  {\includegraphics[width=0.3695\textwidth]{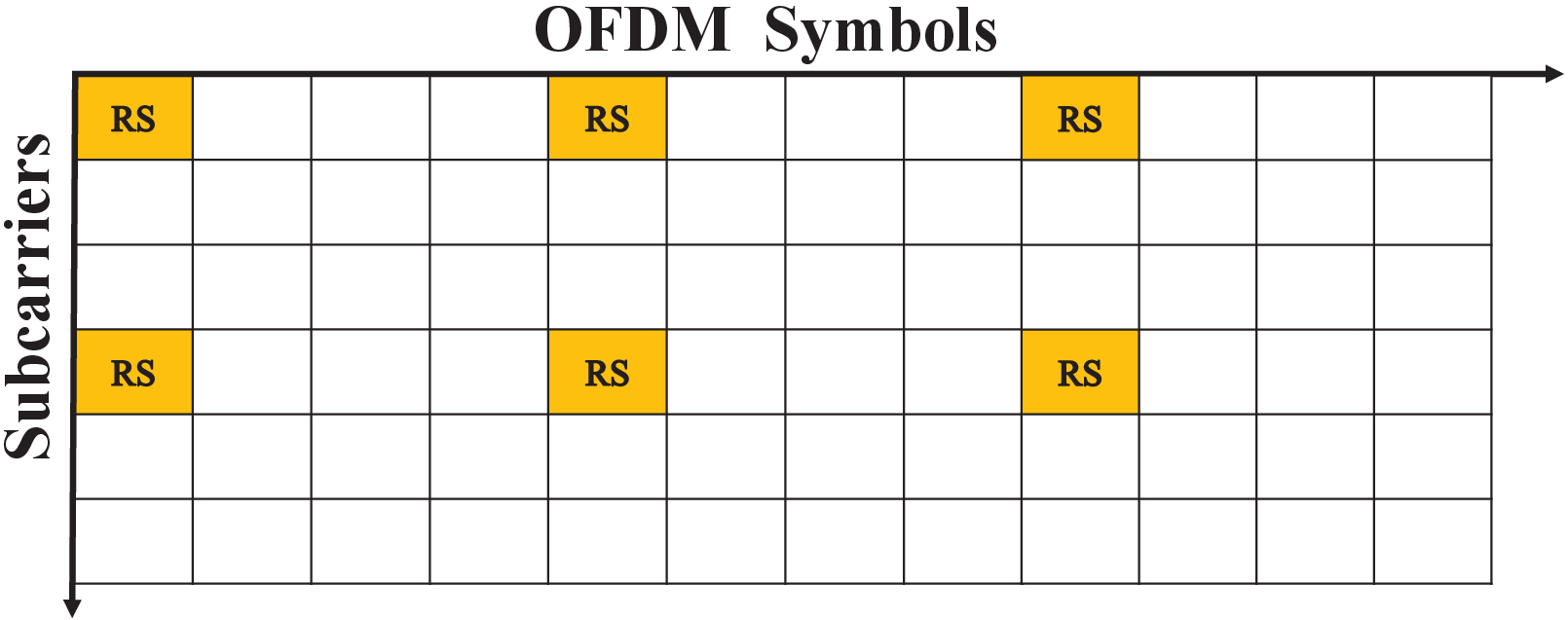}}
  \caption{An example of inserted RSs over OFDM symbols with $Q=3$, $M_0=2$, and $M=6$. }
\label{fig000301}
\vspace{-9pt}
\end{figure}

In morden wireless communication systems (cellular/WiFi), user terminals are able to measure their received signal power, e.g., RSRP. As shown in Fig. \ref{fig000301}, the reference signals (RSs) are inserted in a subset of the $M$ subcarriers for each OFDM symbol for high efficiency of time-frequency resource utilization\cite{3GPPTS361,3GPPTS36}. By leveraging the RSRP measurement capability in current cellular/WiFi systems for wideband IRS channel estimation, the central controller configures the IRS with a set of randomly (subject to (\ref{00702})) generated phase shifts for reflection training and RSRP measurements. For each IRS reflection set, the user terminal measures the received power of RSs in $Q$ (non-consecutive) OFDM symbols (see Fig. \ref{fig000301}) in one channel coherence block, where ${\cal Q} \triangleq \left\{1,2,\cdots,Q\right\}$ denotes the set of these OFDM symbols. \footnote{In practice, even if OFDM channel varies over channel coherence blocks, the channel autocorrelation matrix $\boldsymbol{R}$ stays approximately constant over a longer time as it is mainly determined by the line-of-sight (LoS) channel path or non-LoS (NLoS) channel paths due to static scatterers in the environment \cite{LITOFDM,goldsmith2005wireless}. Thus, the RSRP-based estimation of $\boldsymbol{R}$ is practically robust to small-scale channel variations in space/time.}
For each OFDM symbol, the RSs are assumed to be \emph{uniformly} inserted into $M_0, (M_0 < M)$ subcarriers for power measurement with ${\cal M}_0 \triangleq \left\{m_1,m_2,\cdots,m_{M_0}\right\}$ denoting the index set of the subcarriers inserted with RSs.
Accordingly, the RSRP under any given IRS reflection vector $\boldsymbol{v}$ is given by
\begin{equation}\label{Eqs00301}
\begin{split}
  \bar{p}\left(\boldsymbol{v}\right) = \frac{1}{QM_0}\sum_{q \in {\cal{Q}}}{\sum_{m \in {\cal{M}}_0}{\left|{x}_m{{h}_m} + {z}_m(q)\right|^2}},
\end{split}
\end{equation}
where ${z}_m(q)$ denotes the noise at the $m$-th subcarrier in the $q$-th OFDM symbol at the user receiver. Since the symbol rate of OFDM is usually much higher than the reflection switching rate of the IRS, we have $Q \gg 1$ in practice. As such, the RSRP measurement in (\ref{Eqs00301}) should approach the average received power over RSs in ${\cal M}_0$ with a sufficiently large $Q$, i.e.,
\begin{equation}\label{Eqs0030101}
  \bar{p}\left(\boldsymbol{v}\right) \approx \tilde{p}\left(\boldsymbol{v}\right) \triangleq \frac{1}{M_0}\mathbb{E}\left[\left\|\boldsymbol{\bar{X}}{\boldsymbol{\bar{h}}} + \boldsymbol{\bar{z}}\right\|^2\right],
\end{equation}
where $\boldsymbol{\bar{X}}=\text{diag}\left(x_{m_1},x_{m_2},\cdots,x_{m_{M_0}}\right) \in \mathbb{C}^{M_0 \times M_0}$ denotes the diagonal OFDM symbol matrix over the $M_0$ subcarriers in ${\cal M}_0$, $\boldsymbol{\bar{z}} \triangleq \left[z_{m_1},z_{m_2},\cdots,z_{m_{M_0}}\right]^T \in \mathbb{C}^{M_0 \times 1} \sim {\cal CN}{\left(\textbf{0}_{M_0},\sigma^2\boldsymbol{I}_{M_0}\right)}$ denotes the receiver noise vector over the $M_0$ subcarriers, and ${\boldsymbol{\bar{h}}}=\boldsymbol{\bar{F}}\boldsymbol{G}\boldsymbol{v}$ denotes the CFR of the OFDM channel over the $M_0$ subcarriers with $\boldsymbol{\bar{F}} \in \mathbb{C}^{M_0 \times M}$ denoting a partial DFT matrix. Specifically, the $i$-th row of $\boldsymbol{\bar{F}}$ is taken from the $m_i$-th row of $\boldsymbol{F}$, $m_i \in {\cal M}_0$.
With respect to (\ref{Eqs0030101}), we present the following lemma to reveal the relationship between the RSRP measurements and the average received signal power over all of the $M$ subcarriers, i.e., (\ref{Eqs002010}).
Note that in LTE and NR systems, the number of RS-inserted subcarriers is larger than the number of delay taps\cite{booksofdmsntenna}, i.e., $M_0 \geq K$.

\begin{lemma}\label{lemma0}
When $M_0 \geq K$, the RSRP measurement in (\ref{Eqs0030101}) based on a subset of subcarriers in ${\cal M}_0$ is equal to the average received signal power over all the $M$ subcarriers in (\ref{Eqs002010}), i.e.,
\begin{equation}\label{Eqs00302}
\begin{split}
  {\tilde{p}}\left(\boldsymbol{v}\right) = {p}\left(\boldsymbol{v}\right) = {\boldsymbol{v}}^H{\boldsymbol{R}}{\boldsymbol{v}} + \sigma^2.
\end{split}
\end{equation}
\end{lemma}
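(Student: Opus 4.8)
The plan is to expand $\tilde p(\boldsymbol{v})$ from \eqref{Eqs0030101} exactly as was done for $p(\boldsymbol{v})$ in \eqref{Eqs002010}, and then argue that the partial-DFT Gram matrix $\boldsymbol{\bar F}^H\boldsymbol{\bar X}^H\boldsymbol{\bar X}\boldsymbol{\bar F}$ acts on the channel the same way the full-DFT version does, thanks to the band-limited (short-delay) structure of the CIR. First I would split off the noise: since $\boldsymbol{\bar z}$ is zero-mean and independent of $\boldsymbol{\bar X}$ and $\boldsymbol{\bar h}$, we get $\tilde p(\boldsymbol{v}) = \frac{1}{M_0}\mathbb{E}\!\left[\|\boldsymbol{\bar X}\boldsymbol{\bar h}\|^2\right] + \sigma^2$ (the cross terms vanish and $\frac{1}{M_0}\mathbb{E}[\|\boldsymbol{\bar z}\|^2]=\sigma^2$). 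Substituting $\boldsymbol{\bar h}=\boldsymbol{\bar F}\boldsymbol{G}\boldsymbol{v}$ gives $\tilde p(\boldsymbol{v}) = \boldsymbol{v}^H\boldsymbol{G}^H\!\left(\frac{1}{M_0}\mathbb{E}\!\left[\boldsymbol{\bar F}^H\boldsymbol{\bar X}^H\boldsymbol{\bar X}\boldsymbol{\bar F}\right]\right)\!\boldsymbol{G}\boldsymbol{v} + \sigma^2$.

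The crux is therefore to evaluate $\boldsymbol{A}\triangleq\frac{1}{M_0}\mathbb{E}\!\left[\boldsymbol{\bar F}^H\boldsymbol{\bar X}^H\boldsymbol{\bar X}\boldsymbol{\bar F}\right]\in\mathbb{C}^{M\times M}$ and show that $\boldsymbol{G}^H\boldsymbol{A}\boldsymbol{G} = \frac{P}{M}\boldsymbol{G}^H\boldsymbol{G} = \boldsymbol{R}$. I would use that the RS symbols on the $M_0$ pilot subcarriers carry equal power, so $\mathbb{E}[\boldsymbol{\bar X}^H\boldsymbol{\bar X}] = \frac{P}{M}\boldsymbol{I}_{M_0}$ (each subcarrier's share of the total power $P$ spread over $M$ subcarriers), whence $\boldsymbol{A} = \frac{P}{M}\cdot\frac{1}{M_0}\boldsymbol{\bar F}^H\boldsymbol{\bar F}$. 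Now $\frac{1}{M_0}\boldsymbol{\bar F}^H\boldsymbol{\bar F}$ is not the identity — $\boldsymbol{\bar F}$ has only $M_0<M$ rows — but because the RS subcarriers in $\mathcal{M}_0$ are \emph{uniformly} spaced, $\frac{1}{M_0}\boldsymbol{\bar F}^H\boldsymbol{\bar F}$ is (a scalar multiple of) the $M\times M$ circulant "aliasing" matrix that is identity on the first $M_0$ diagonal positions and repeats with period $M_0$ along the delay axis. The key point is that every column of $\boldsymbol{G}$ — namely $\boldsymbol{f}$ and each $\boldsymbol{g}_n$ — is supported only on its first $K\le M_0$ entries (zero-padding of CIRs with at most $K$ taps). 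On that support the aliasing matrix coincides with $\boldsymbol{I}$, so $\boldsymbol{\bar F}^H\boldsymbol{\bar F}\boldsymbol{G} = M_0\,\boldsymbol{G}$ (equivalently, $\boldsymbol{e}_i^H\boldsymbol{\bar F}^H\boldsymbol{\bar F}\boldsymbol{e}_j = M_0\,\delta_{ij}$ whenever $i,j\le M_0$, which is a standard orthogonality property of uniformly subsampled DFT columns). Substituting back yields $\boldsymbol{G}^H\boldsymbol{A}\boldsymbol{G} = \frac{P}{M}\boldsymbol{G}^H\boldsymbol{G} = \boldsymbol{R}$, hence $\tilde p(\boldsymbol{v}) = \boldsymbol{v}^H\boldsymbol{R}\boldsymbol{v} + \sigma^2 = p(\boldsymbol{v})$.

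The step I expect to be the main obstacle is making the aliasing argument airtight: one must verify that the uniform insertion of $M_0$ RS subcarriers over $M$ total subcarriers (with $M_0 \mid M$, as in the figure) makes the columns of $\boldsymbol{\bar F}$ indexed by $\{0,1,\dots,M_0-1\}$ mutually orthogonal with norm $\sqrt{M_0}$, and that no energy of $\boldsymbol{G}$'s columns leaks past index $M_0-1$. The first is the Dirichlet-kernel / geometric-sum computation $\sum_{\ell=0}^{M_0-1} e^{\jmath 2\pi \ell (i-j)/M_0} = M_0\,\delta_{ij}$ for $|i-j|<M_0$, combined with the uniform spacing $m_\ell = m_1 + \ell\,(M/M_0)$; the second is exactly the hypothesis $M_0\ge K$ together with $K_1\le K$ and $K_r\le K$. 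I would state these two facts cleanly (perhaps as a displayed identity for $\boldsymbol{\bar F}^H\boldsymbol{\bar F}$ restricted to the relevant index block) and then the conclusion follows by direct substitution, with no further estimation needed.
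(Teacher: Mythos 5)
Your proposal is correct and follows essentially the same route as the paper's appendix: split off the noise, use $\mathbb{E}[\boldsymbol{\bar X}^H\boldsymbol{\bar X}]=\tfrac{P}{M}\boldsymbol{I}_{M_0}$, exploit the uniform pilot spacing to show $\boldsymbol{\bar F}^H\boldsymbol{\bar F}$ acts as $M_0\boldsymbol{I}$ on the first $M_0$ delay indices, and use the zero-padding of $\boldsymbol{G}$ beyond tap $K\le M_0$ to conclude $\boldsymbol{G}^H\boldsymbol{\bar F}^H\boldsymbol{\bar F}\boldsymbol{G}=M_0\boldsymbol{G}^H\boldsymbol{G}$. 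The only nitpick is that $\boldsymbol{\bar F}^H\boldsymbol{\bar F}\boldsymbol{G}=M_0\boldsymbol{G}$ is not exact row-by-row (aliased copies appear in rows beyond $M_0$), but your restricted identity $\boldsymbol{e}_i^H\boldsymbol{\bar F}^H\boldsymbol{\bar F}\boldsymbol{e}_j=M_0\delta_{ij}$ for $i,j\le M_0$ is what is actually needed and matches the paper's argument.
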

\begin{proof}
    See Appendix.
\end{proof}

Lemma \ref{lemma0} implies that although the RSRP measurement is performed at only a subset of subcarriers, it can characterize the average received signal power over all of the $M$ subcarriers, suggesting that the wideband channel autocorrelation matrix can be extracted from the RSRP measurements. To this end, the central controller collects the RSRP measurements ${{\cal{P}} \triangleq \{\bar{p}(\boldsymbol{v}_1),\bar{p}(\boldsymbol{v}_2),\cdots,\bar{p}(\boldsymbol{v}_L)\}}$ from the user side under $L$ different IRS reflection vectors $\boldsymbol{v}_1,\boldsymbol{v}_2,\cdots,\boldsymbol{v}_L$, for IRS channel autocorrelation matrix estimation, as presented next.

\subsection{NN-Empowered Channel Autocorrelation Matrix Estimation}

The channel autocorrelation matrix $\boldsymbol{R} \in \mathbb{C}^{(N+1) \times (N+1)}$ is a high-dimensional Hermitian and PSD matrix, whose estimation entails the complexity in the order of ${\cal O}\left(N^2\right)$. Fortunately, the rank of $\boldsymbol{R}$, which is equal to the maximum number of delay taps, $K$, is in general much smaller than the size of channel autocorrelation matrix $\boldsymbol{R}$, i.e., $K \ll (N+1)$, for an IRS with numerous reflecting elements, $N$. By exploiting this sparsity property of $\boldsymbol{R}$, we can express $\boldsymbol{R}$ equivalently using a small number ($K$) of basis vectors via matrix spectral decomposition, thereby transforming the high-dimensional channel autocorrelation matrix estimation problem into a lower-dimensional basis-vector estimation problem, thus significantly reducing the computational complexity. To achieve this purpose, we present the following lemma.
\begin{lemma}
\label{lemma1}
For any $N$-dimensional Hermitian matrix $\boldsymbol{R}$ with $\text{rank}\left({\boldsymbol{R}}\right) = K$, there always exist $K$ basis vectors, $\boldsymbol{a}_k \in \mathbb{C}^{(N+1) \times 1}$, $1 \le k \le K$, such that
\begin{equation}\label{Eqs00303}
   \boldsymbol{R} = \sum_{k=1}^{K}{\boldsymbol{a}_k}{\boldsymbol{a}^H_k}.
\end{equation}
\end{lemma}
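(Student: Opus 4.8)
The plan is to invoke the spectral theorem for Hermitian matrices and then absorb the eigenvalues into the eigenvectors. Since $\boldsymbol{R}$ is Hermitian, it admits an eigendecomposition $\boldsymbol{R} = \boldsymbol{U}\boldsymbol{\Lambda}\boldsymbol{U}^H$ with $\boldsymbol{U}$ unitary and $\boldsymbol{\Lambda} = \text{diag}\left(\lambda_1,\lambda_2,\cdots,\lambda_{N+1}\right)$ real. Because $\text{rank}\left(\boldsymbol{R}\right) = K$, exactly $K$ of the eigenvalues are nonzero; relabel so that $\lambda_1,\cdots,\lambda_K \ne 0$ and the rest vanish. Then $\boldsymbol{R} = \sum_{k=1}^{K}\lambda_k \boldsymbol{u}_k\boldsymbol{u}_k^H$, where $\boldsymbol{u}_k$ is the $k$-th column of $\boldsymbol{U}$.

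The next step is to handle the sign of the eigenvalues so that each rank-one term can be written as $\boldsymbol{a}_k\boldsymbol{a}_k^H$ rather than $\pm\boldsymbol{a}_k\boldsymbol{a}_k^H$. Here I would use the fact that $\boldsymbol{R}$ is not merely Hermitian but positive semidefinite, as established just after (\ref{Eqs002010}): $\boldsymbol{R} = \frac{P}{M}\boldsymbol{G}^H\boldsymbol{G} \succeq 0$. Hence every $\lambda_k > 0$ for $1 \le k \le K$, and I may define $\boldsymbol{a}_k \triangleq \sqrt{\lambda_k}\,\boldsymbol{u}_k \in \mathbb{C}^{(N+1)\times 1}$. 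Substituting gives $\boldsymbol{a}_k\boldsymbol{a}_k^H = \lambda_k \boldsymbol{u}_k\boldsymbol{u}_k^H$, so that $\boldsymbol{R} = \sum_{k=1}^{K}\boldsymbol{a}_k\boldsymbol{a}_k^H$, which is exactly (\ref{Eqs00303}). An alternative, purely structural route avoids eigenvalues altogether: take the thin decomposition $\boldsymbol{R} = \boldsymbol{B}^H\boldsymbol{B}$ (e.g. via $\boldsymbol{B} = \sqrt{P/M}\,\boldsymbol{G}$, or a Cholesky-type factor of rank $K$) and let $\boldsymbol{a}_k$ be the conjugate of the $k$-th row of $\boldsymbol{B}$; the rank condition ensures only $K$ such rows are needed after discarding a zero-dimensional nullspace contribution.

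The only subtlety — and the one place the statement as written is slightly loose — is the implicit reliance on positive semidefiniteness. For a general indefinite Hermitian $\boldsymbol{R}$ of rank $K$ the claim (\ref{Eqs00303}) is false (a sum of terms $\boldsymbol{a}_k\boldsymbol{a}_k^H$ is always PSD), so the proof must and does use $\boldsymbol{R}\succeq 0$; I would note this explicitly, since the lemma is only ever applied to the channel autocorrelation matrix $\boldsymbol{R}$, which the model guarantees is PSD. Beyond that observation, the argument is routine: the main "obstacle" is purely bookkeeping — correctly counting nonzero eigenvalues against $\text{rank}\left(\boldsymbol{R}\right) = K$ and verifying that the $\boldsymbol{a}_k$ indeed live in $\mathbb{C}^{(N+1)\times 1}$ (consistent with $\boldsymbol{R}$ being $(N+1)\times(N+1)$, despite the lemma's header saying "$N$-dimensional"). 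No hard estimate or nontrivial construction is required.
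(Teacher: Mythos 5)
Your argument is correct and matches the paper's own (very brief) justification, which likewise constructs the $\boldsymbol{a}_k$ via spectral decomposition of the Hermitian, PSD matrix $\boldsymbol{R}$. Your added remarks — that positive semidefiniteness (not mere Hermiticity) is what makes the claim true, and that the matrix is really $(N+1)\times(N+1)$ despite the lemma's ``$N$-dimensional'' wording — are accurate observations about the statement's loose phrasing, not gaps in the proof.
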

The existence of (\ref{Eqs00303}) can be readily proofed by constructing a set of basis vectors through spectral decomposition to the Hermitian and PSD matrix $\boldsymbol{R}$. Lemma \ref{lemma1} illustrates that $\boldsymbol{R}$ can always be recovered as the sum of $K$ rank-one matrices shown in the right-hand side of (\ref{Eqs00303}). Accordingly, the noiseless received signal power can be represented as
\begin{equation}\label{Eqs00308}
  \boldsymbol{v}^H{\boldsymbol{R}}{\boldsymbol{v}}  =\sum_{k=1}^{K}{\left|{\boldsymbol{v}}^H\boldsymbol{a}_k\right|^2}.
\end{equation}
Notably, the noiseless received signal power in (\ref{Eqs00308}) can be predicted by a single-layer NN. Particularly, the NN takes the IRS passive reflection vector $\boldsymbol{v}$ and (\ref{Eqs00308}) as its input and output, respectively. Since (\ref{Eqs00308}) is the sum of $K$ squared amplitude values, i.e., $\lvert \boldsymbol{v}^H \boldsymbol{a}_k \rvert^2, k=1,2,\cdots, K$, the NN can be divided into $K$ subnetworks with the weights of the $k$-th subnetwork corresponding to the basis vector $\boldsymbol{a}_k$, as shown in Fig. \ref{fig001}. Moreover, let $\boldsymbol{u}=\left[{\Re}\left(\boldsymbol{v}^T\right), {\Im}\left(\boldsymbol{v}^T\right)\right]^T \in \mathbb{R}^{\left(2N+2\right) \times 1}$ denote the real-valued IRS passive reflection vector, and $\boldsymbol{B}_k$ denote the real-valued basis matrix, i.e.,
\begin{equation}\label{Eqs003010}
  \boldsymbol{B}_k = \left[ {\begin{array}{*{20}{c}}
{ {{\mathop{\rm \Re}\nolimits} \left( \boldsymbol{{a}}_k \right)} }&{{{{\mathop{\rm \Im}\nolimits} \left( \boldsymbol{{a}}_k \right)}}}\\
{ {{{\mathop{\rm \Im}\nolimits} \left( \boldsymbol{{a}}_k \right)}}}&{{-{{\mathop{\rm \Re}\nolimits} \left( \boldsymbol{{a}}_k \right)}}}
\end{array}} \right]\in \mathbb{R}^{\left({2N+2}\right) \times 2}.
\end{equation}
As such, we can express (\ref{Eqs00308}) in the real-value domain as
\begin{equation}\label{Eqs00309}
  \sum_{k=1}^{K}{\left|{\boldsymbol{v}}^H\boldsymbol{a}_k\right|^2} = \sum_{k=1}^{K}{\left\|\boldsymbol{u}^T{\boldsymbol{B}_k}\right\|^2}.
\end{equation}
Based on (\ref{Eqs00309}), the NN takes the real-valued IRS passive reflection vector $\boldsymbol{u}$ as its input and aims to recover the noiseless received signal power at its output. Particularly, the $K$ subnetworks share the same input $\boldsymbol{u}$ and aim to recover ${\left\|\boldsymbol{u}^T{\boldsymbol{B}_k}\right\|^2}, k=1,2, \cdots ,K$, at their individual outputs, respectively. Define $\boldsymbol{W}_{k} \in \mathbb{R}^{\left({2N+2}\right) \times 2}$ as the weights of the $k$-th subnetwork. The values of two neurons in the hidden layer of the $k$-th subnetwork can be computed by $\boldsymbol{e}^T_k =\boldsymbol{u}^T{\boldsymbol{W}_k}$,
with $\boldsymbol{e}_k \triangleq \left[{e}_{k,1} , {e}_{k,2}\right]^T \in \mathbb{R}^{2 \times 1}$. A squared norm function is adopted as the activation function in the $k$-th subnetwork to compute ${\left\|\boldsymbol{e}_k\right\|^2}$. The output layer of the entire NN is given by the sum of the outputs of the $K$ subnetworks, i.e.,
\begin{equation}\label{Eqs003012}
  \hat p \left(\boldsymbol{v}\right) = \sum_{k=1}^{K}{\left\|\boldsymbol{e}_k\right\|^2} = \sum_{k=1}^{K}{\left\|\boldsymbol{u}^T{\boldsymbol{W}_k}\right\|^2}.
\end{equation}

\begin{figure}[t]
  \centering
  {\includegraphics[width=0.4361\textwidth]{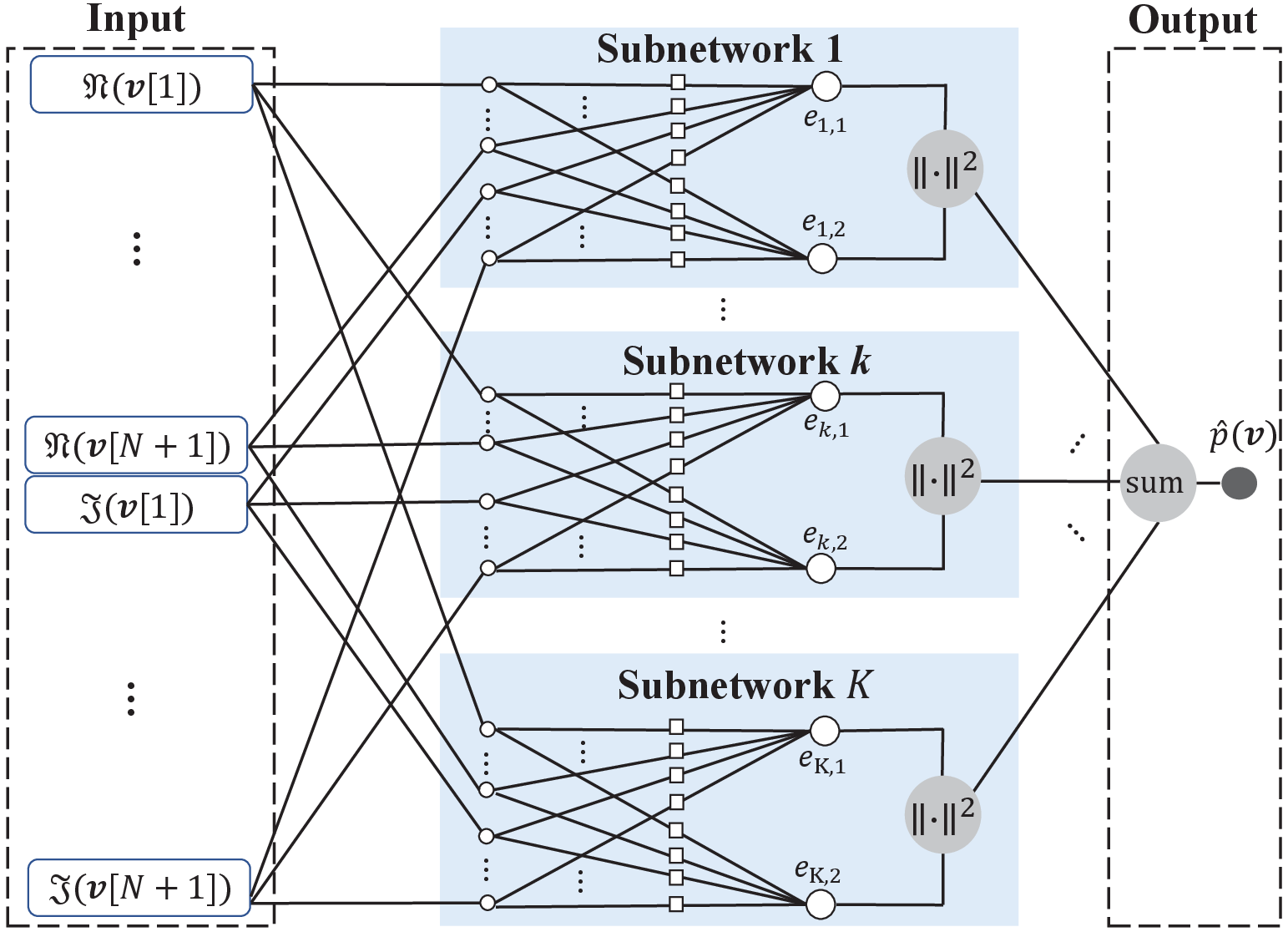}}
  \caption{NN structure for wideband channel autocorrelation estimation. }
\label{fig001}
\vspace{-16pt}
\end{figure}

Comparing (\ref{Eqs00309}) with (\ref{Eqs003012}), it follows that if $\boldsymbol{W}_{k}=\boldsymbol{B}_k, \forall k$, we have $\hat p\left(\boldsymbol{v}\right) = \boldsymbol{v}^H{\boldsymbol{R}}{\boldsymbol{v}}$. Inspired by this, the real-valued basis matrix $\boldsymbol{B}_k$ can be recovered by training the NN's weight matrix $\boldsymbol{W}_{k}$. To this end, the weight matrix $\boldsymbol{W}_{k}, \forall k,$ is designed to share a  structure similar to that of $\boldsymbol{B}_k$, i.e.,
\begin{equation}\label{Eqs001013}
  \boldsymbol{W}_{k} = \left[ {\begin{array}{*{20}{c}}
{ {\boldsymbol{w}_{k,1}} }&{{\boldsymbol{w}_{k,2}}}\\
{ {{\boldsymbol{w}_{k,2}}}}&{{-{\boldsymbol{w}_{k,1}}}}
\end{array}} \right] \in \mathbb{R}^{\left({2N+2}\right) \times 2} ,
\end{equation}
where $\boldsymbol{w}_{k,1} \in \mathbb{R}^{(N+1) \times 1}$ and $\boldsymbol{w}_{k,2} \in {\mathbb{R}}^{(N+1) \times 1}$ correspond to the real and imaginary parts of $\boldsymbol{a}_k$, i.e., ${{{\mathop{\rm \Re}\nolimits} \left( \boldsymbol{{a}}_k \right)}}$ and ${{{{\mathop{\rm \Im}\nolimits} \left( \boldsymbol{{a}}_k \right)}}}$, respectively.
Let $\boldsymbol{w}_{k} = \boldsymbol{w}_{k,1} + \jmath \boldsymbol{w}_{k,2}$ denote the complex weight vector of the $k$-th subnetwork. By substituting (\ref{Eqs001013}) into (\ref{Eqs003012}) and leveraging $\boldsymbol{u}=\left[{\Re}\left(\boldsymbol{v}^T\right), {\Im}\left(\boldsymbol{v}^T\right)\right]^T$, we obtain
\begin{equation}\label{Eqs0001016}
  \hat p(\boldsymbol{v}) = \sum_{k=1}^{K}{\left|\boldsymbol{v}^H\boldsymbol{w}_k\right|^2}=\boldsymbol{v}^H\left(\sum_{k=1}^{K}{\boldsymbol{w}_k\boldsymbol{w}_k^H}\right)\boldsymbol{v}.
\end{equation}
If ${\hat p}\left(\boldsymbol{v}\right) = {\boldsymbol{v}}^H{\boldsymbol{R}}{\boldsymbol{v}}$ holds for any $\boldsymbol{v}$, then $\boldsymbol{v}^H\left(\sum_{k=1}^{K}{\boldsymbol{w}_k\boldsymbol{w}_k^H}- \boldsymbol{R}\right)\boldsymbol{v} = 0$ should hold for any $\boldsymbol{v}$, and thus the IRS channel autocorrelation matrix can be estimated by
\begin{equation}\label{Eqs001015}
  \boldsymbol{\hat{R}} = \sum_{k=1}^{K}{\boldsymbol{w}_{k}\boldsymbol{w}_{k}^H}.
\end{equation}
It follows from the above that we can reconstruct $\boldsymbol{R}$ by training the NN via supervised learning. Specifically, the IRS training reflection sets $\boldsymbol{v}_1,\cdots,\boldsymbol{v}_L$ and their corresponding RSRP measurements $\cal{P}$ are used as the input of the NN and the training labels, respectively. To enhance the generalization ability of the NN, we randomly select $L_1 (L_1<L)$ entries of $\cal P$ as the training set, with the remaining data utilized as the validation set. The loss function is defined as the mean square error (MSE) between the RSRP measurement and the NN's output, i.e.,
\begin{equation}
    {\cal L}_{\boldsymbol{W}} = \frac{1}{L_1}\sum_{l=1}^{L_1}{\left(\bar p(\boldsymbol{v}_l) - \sigma^2 - \hat p(\boldsymbol{v}_l)\right)^2}.
\label{losssn}
\end{equation}
The NN in Fig. \ref{fig001} can be trained by minimizing (\ref{losssn}) using the stochastic gradient descent algorithm\cite{SGD}.

\subsection{IRS Reflection Design}
Based on the estimate in (\ref{Eqs001015}), we can optimize the IRS passive reflection by solving problem (P1) via replacing the channel autocorrelation matrix therein with its estimate, ${\boldsymbol{\hat R}}$. However, optimization problem (P1) is non-convex because of the discrete phase-shift constraints (\ref{00702}). To solve (P1) efficiently, we combine the semidefinite relaxation (SDR) technique with the successive refinement (SF) strategy\cite{WdIRS}. Define ${\boldsymbol{V}=\boldsymbol{v}\boldsymbol{v}^H}$ as the autocorrelation matrix of reflection vector ${\boldsymbol{v}}$, which is a PSD matrix, i.e., $\boldsymbol{V} \succeq 0$. As such, problem (P1) is reformulated as
\begin{subequations}\label{Eqs00505}
\begin{align}
  \text{(P2):} &\max_{{\boldsymbol{V}}} \ \text{Tr}\left(\boldsymbol{\hat R}\boldsymbol{V}\right) \label{O03016} \\
 & \ \ \text{s.t.} \ {\theta}_n \in \Phi_\mu, \ n=1,2,\cdots,N,\label{C0301604} \\
 & \ \ \ \ \ \ {\boldsymbol{V}}\succeq 0,\label{C0301603} \\
 & \ \ \ \ \ \ \ \text{rank}({\boldsymbol{V}})=1\label{O030165} .
\end{align}
\end{subequations}
Next, we relax problem (P2) into a convex semidefinite programming (SDP) problem by replacing the constraint (\ref{C0301604}) with $\boldsymbol{V}_{n,n}=1, n=1, 2, \cdots , N+1$ and removing the constraint (\ref{O030165}). As such, the optimal solution to this SDP problem can be obtained by utilizing the interior-point algorithm\cite{cvx}. However, the achieved solution to the above SDP problem may not satisfy (\ref{C0301604}) or (\ref{O030165}). To address this problem, we perform Gaussian randomization jointly with phase quantization to achieve a feasible solution to (P2). Finally, we use the SF method to refine the above solution.

\section{Simulation Results}\label{sec004}

The simulation setup is shown in Fig. \ref{fig0001}, where an IRS is deployed parallel to the $y$-$z$ plane in a three-dimensional Cartesian coordinate system (in meter). The IRS is equipped with a uniform planar array (UPA) consisting of $N = 8 \times 4 = 32$ reflecting elements with inter-element spacing equal to half wavelength. The reference point (the bottom-left reflecting element) of the IRS is located at $(-2,-1,0)$. The BS and the receiver are deployed at $(35,-20,15)$ and $(0,1,0)$, respectively. Let $\beta_1, \beta_2$ and $\beta_3$ denote the distance-dependent path loss (in dB) of the BS-user, BS-IRS and IRS-user channels, which are given by $\beta_1 =  33 + 37\text{log}_{10}(d_1), \beta_2 =  30 + 20\text{log}_{10}(d_2), \beta_3 =  30 + 20\text{log}_{10}(d_3)$\cite{YWJSAC,CSM,3GPPTS25996}, with $d_1,d_2$ and $d_3$ denoting the distance between the BS and the user, that between the BS and the IRS, and that between the IRS and the user, respectively.

The OFDM system is configured with $M = 128$ subcarriers, with the number of delay taps for the BS-user, BS-IRS and IRS-user channels set to $K_1=4$, $K_2=4$ and $K_3=3$, respectively. Considering that the BS is far away from the IRS/user, the BS-IRS and BS-user frequency-selective channels are simulated using an exponentially decaying power delay profile given by $\zeta_{i , k }=\frac{1}{\eta_{i , k}}e^{-\varepsilon(k-1)}$, $k=1,2, \cdots , K_i, i=1,2$, where $\eta_{i,k}=\sum_{k=1}^{K_i}{e^{-\varepsilon(k-1)}}$ denotes the normalization factor with $\varepsilon$ denoting the exponential decay factor\cite{Powerechannel0,riihonen2010generalized}. The CIR coefficients for each link are generated based on independent and identically distributed (i.i.d.) Rayleigh fading with average power $10^{-\beta_i/10}\zeta_{i,k}, k=1,2, \cdots , K_i, i=1,2$\cite{Zhengbx2020}. Besides, we assume frequency-selective Rician fading for the IRS-user link. Specifically, the first tap of the IRS-user link is modeled as the line-of-sight (LoS) component, while the other taps are modeled as i.i.d. non-LoS (NLoS) Rayleigh fading components\cite{Zhengbx2020}. Let $\kappa$ denote the Rician factor, i.e., the power ratio of the dominant LoS component to that of all the NLoS fading components. In the simulation, we set $\kappa=7$ dB\cite{3GPPTS25996}. The number of RS-inserted OFDM symbols is set to $Q=30$, and the number of RSs inserted in each OFDM symbol is set to $M_0=64$\cite{3GPPTS36}. The power decaying factor in the OFDM channel is set to $\varepsilon=2$. The noise power is $\sigma^2=-90$ \text{dBm} and the BS's transmit power is $P=30$ \text{dBm}. The number of bits for controlling IRS reflection phase shifts is set to $\mu = 2$, if not specified otherwise.

\begin{figure}[t]
\hspace{0.2336cm}
\centering
{\includegraphics[width=0.4159\textwidth]{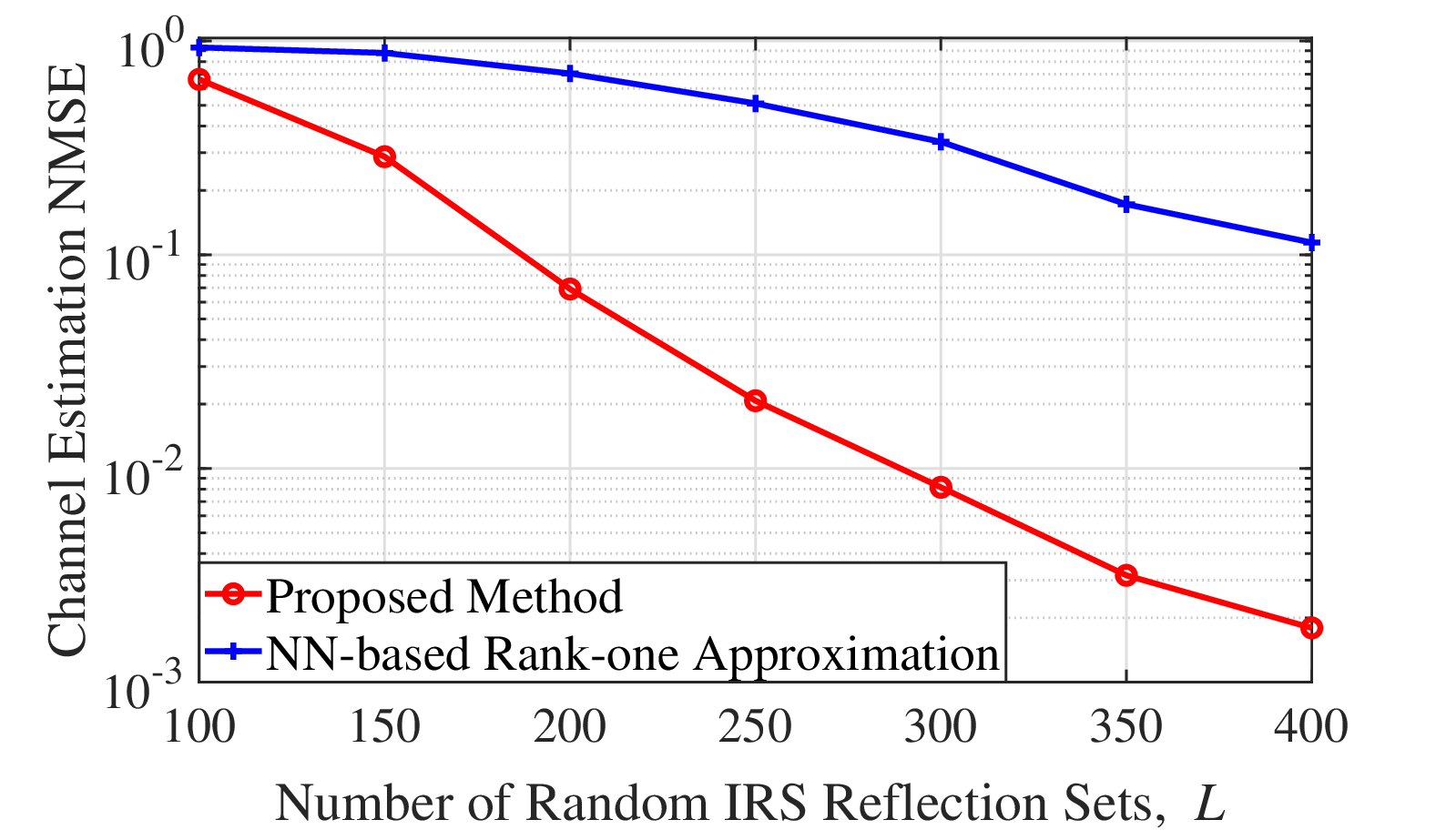}}
  \caption{IRS channel autocorrelation estimation NMSE versus the number of random IRS reflection sets, $L$. }
\label{fig0401}
\end{figure}

We evaluate the normalized MSE (NMSE) performance of the proposed NN-enabled IRS channel autocorrelation estimation algorithm. In particular, the NMSE between the estimated channel autocorrelation matrix and the actual one is given by
$\text{NMSE} = \frac{\left\|\boldsymbol{\hat R} - \boldsymbol{R}\right\|_F^2}{\left\|\boldsymbol{R}\right\|_F^2}$. Note that the proposed NN-enabled channel estimation method reduces to the NN-based narrow-band IRS channel estimation method\cite{sun2024power} by setting the number of subnetworks to 1. Thus, the NMSE by the proposed NN-enabled channel estimation with $1$ subnetwork (labeled as ``NN-based Rank-one Approximation'') are include as a benchmark.

Fig. \ref{fig0401} shows the NMSE of IRS channel estimation versus the number of random IRS reflection sets, $L$. It depicts that the NMSE decreases by enlarging $L$, as expected. In addition, the proposed NN-enabled channel estimation algorithm achieves much lower NMSE than the benchmark of Rank-one Approximation, owing to its ability to generate a higher-rank approximation of the channel autocorrelation matrix under the wideband setup.

Next, we compare the average received SNRs at the user in (\ref{00701}) by different IRS passive reflection designs. In particular, we adopt the random-max sampling (RMS)\cite{CSM} and CSM \cite{CSM} methods as benchmark schemes, as both of them utilize user power measurements for optimizing the IRS passive reflection. Specifically, the RMS method selects the IRS reflection vector that yields the maximum received signal power among all RSRP measurements as the optimized one. While the CSM method in \cite{CSM} computes the sample mean of RSRP measurements conditioned on $\theta_n=\psi, \psi \in \Phi_\mu$, and selects the phase shift that maximizes the CSM for each reflecting element. Furthermore, the IRS reflection design using perfect CSI (i.e., by substituting the perfect knowledge of ${\boldsymbol{R}}$ into (\ref{O03016}) for solving problem (P2)) is also included as the performance upper bound on the achievable average received SNR (labeled as ``Upper Bound'').
\begin{figure}
  \centering
  \subfigure[$\mu=1$]{\includegraphics[width=0.24\textwidth]{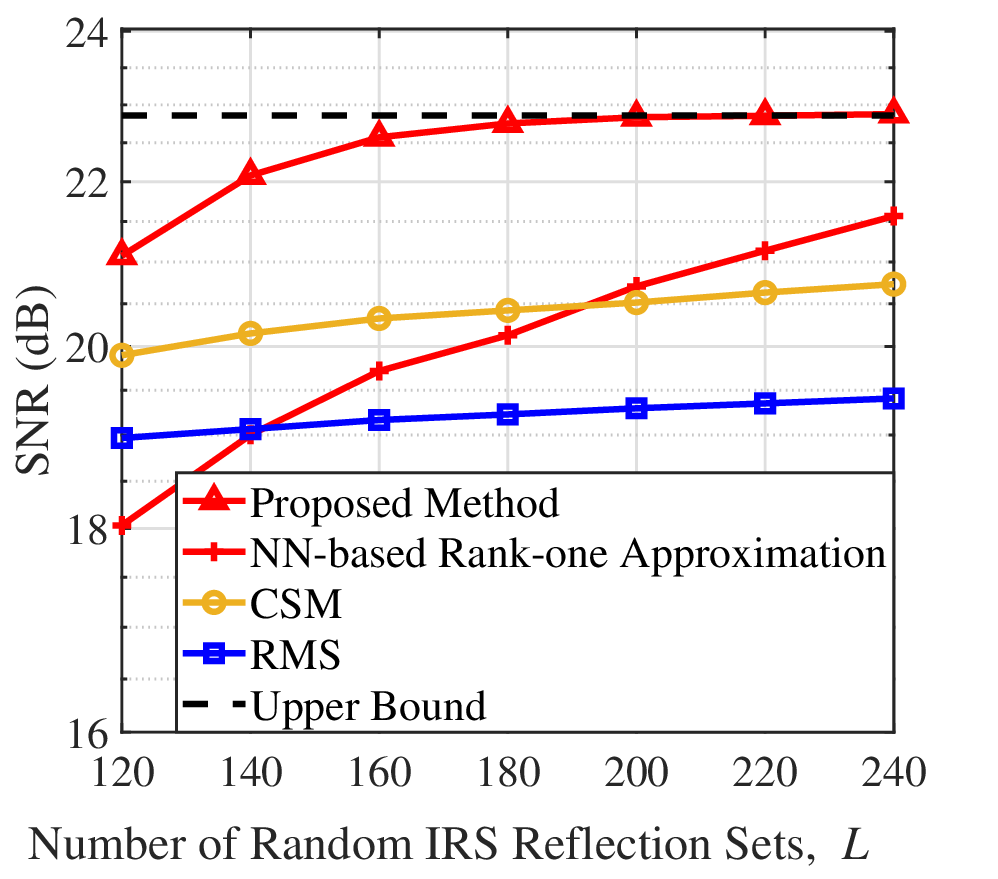}}
  \subfigure[$\mu=2$]{\includegraphics[width=0.24\textwidth]{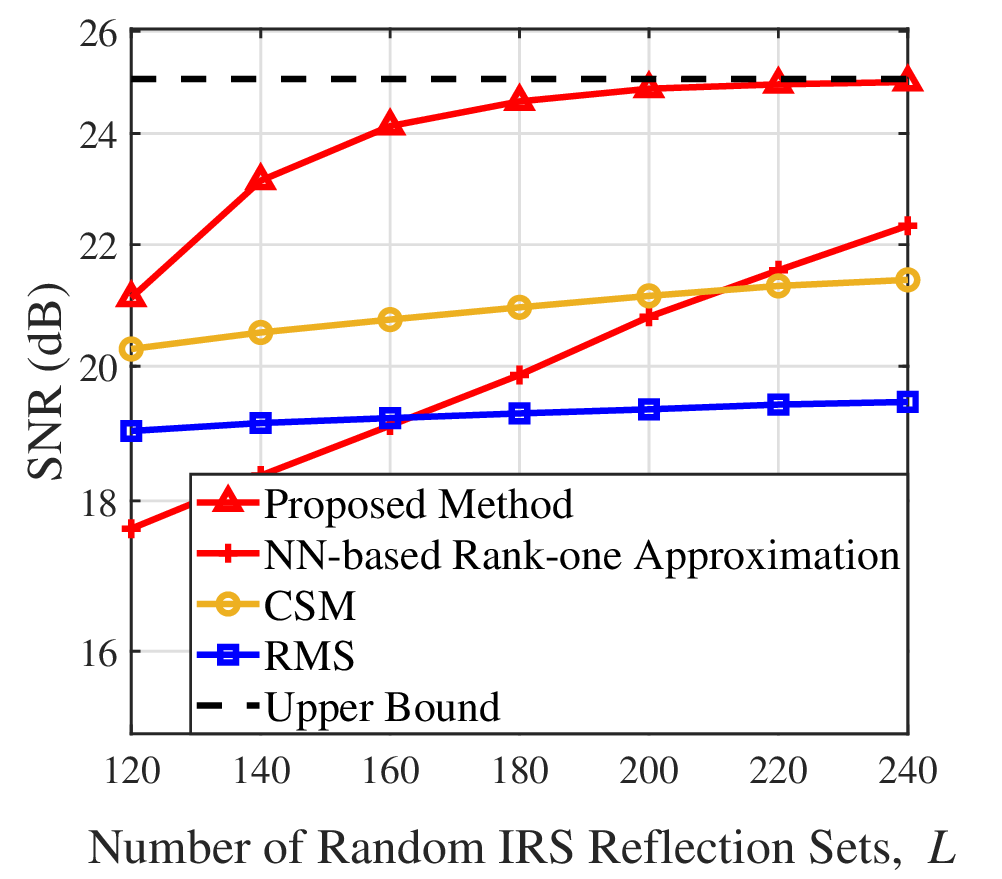}}
  \caption{Average SNR performance versus the number of random IRS reflection sets, $L$.}
\label{fig0005}
\end{figure}

Figs. \ref{fig0005}(a) and \ref{fig0005}(b) show the average received SNRs achieved by different IRS passive reflection designs versus the number of random IRS passive reflection sets, $L$, under $\mu = 1,2$, respectively. The results depict that the received SNR achieved by the proposed method is larger than that by other benchmark schemes. By further increasing $L$, the achieved SNR performance of the proposed algorithm can approach the upper bound assuming perfect CSI. Furthermore, a significant SNR improvement by the proposed scheme is observed after increasing $\mu$, owing to the improved resolution of IRS phase shift for both IRS channel autocorrelation matrix recovery and IRS reflection optimization.

\section{Conclusion}\label{Sec006}

In this paper, a NN-empowered wideband IRS channel autocorrelation estimation and IRS reflection design framework based on user power measurements was proposed for OFDM communication systems by utilizing a single-layer NN structure with its weights representing the autocorrelation matrix of the wideband IRS channel. Simulation results demonstrated that the proposed wideband IRS channel autocorrelation matrix recovery and IRS reflection optimization framework substantially outperforms other power-measurement-based benchmarks and approaches the performance upper bound. Our proposed wideband IRS channel autocorrelation matrix estimation design can be generalized to multi-IRS and/or multiuser communication systems, which are worth studying in future work.

\appendix
By taking advantage of the fact that $\mathbb{E}\left[\boldsymbol{\bar{X}}{\boldsymbol{\bar{h}}}\boldsymbol{\bar{z}}\right]=0$ and ${\boldsymbol{\bar{h}}}=\boldsymbol{\bar{F}}\boldsymbol{G}\boldsymbol{v}$, (\ref{Eqs0030101}) can be simplified as
\begin{equation}\label{Eqs0A002}
\begin{split}
\tilde{p}\left(\boldsymbol{v}\right) & = \frac{\mathbb{E}\left[\left\|\boldsymbol{\bar{X}}{\boldsymbol{\bar{h}}}\right\|^2\right]}{M_0} + \sigma^2 \\
&  = \frac{P{\boldsymbol{v}^H\boldsymbol{G}^H\boldsymbol{\bar{F}}^H\boldsymbol{\bar{F}}\boldsymbol{G}\boldsymbol{v}}}{M_0M} + \sigma^2,
\end{split}
\end{equation}
where we have utilized the fact that $\mathbb{E}\left[\boldsymbol{\bar{X}}^H\boldsymbol{\bar{X}}\right] = \frac{P}{M}\boldsymbol{I}_{M_0}$. It is worth noting that the partial DFT matrix $\boldsymbol{\bar{F}}$ is constructed by selecting the $m_i$-th row ($\forall \ m_{i} \in {{\cal M}_0}$) from the DFT matrix $\boldsymbol{F}_{M}$. As the RSs are uniformly inserted into $M_0$ subcarriers, we have $m_{i+1}-m_{i}=m_{i+2}-m_{i+1}, i =1,2,\cdots,M_0-2, \forall m_{i} \in {{\cal M}_0}$. Thus, the autocorrelation of the partial DFT matrix is given by\footnote{The total number of subcarriers is divisible by the number of subcarriers inserted with RSs, i.e., $M/M_0 \in \mathbb{N}$, as specified by the 3GPP standard\cite{3GPPTS38_212}.}
\begin{equation}\label{Eqs0A003}
\boldsymbol{\bar{F}}^H\boldsymbol{\bar{F}} = M_0\left[ {\begin{array}{*{20}{c}}
{{\boldsymbol{I}_{{M_0}}}}& \cdots &{{\boldsymbol{I}_{{M_0}}}}\\
 \vdots & \ddots & \vdots \\
{{\boldsymbol{I}_{{M_0}}}}& \cdots &{{\boldsymbol{I}_{{M_0}}}}
\end{array}} \right] \in \mathbb{C}^{M \times M}.
\end{equation}
Furthermore, note that the last $(M-K)$ rows of the CIR matrix $\boldsymbol{G}$ are zero-padded, i.e.,
\begin{equation}\label{Eqs0A004}
  \boldsymbol{G} = \left[ {\begin{array}{*{20}{c}}
\boldsymbol{\bar{G}}\\
\textbf{0}_{(M-K) \times (N+1)}
\end{array}} \right] \in \mathbb{C}^{M \times (N+1)},
\end{equation}
where $\boldsymbol{\bar{G}} \in \mathbb{C}^{K \times (N+1)}$ denotes the first $K$ rows of the CIR matrix, $\boldsymbol{G}$.
Thus, we have
\begin{equation}\label{Eqs0A00502}
\boldsymbol{\bar {G}}^H\boldsymbol{\bar {G}} = \boldsymbol{G}^H\boldsymbol{G}.
\end{equation}
Since $M_0 \geq K$, based on (\ref{Eqs0A003}), (\ref{Eqs0A004}) and (\ref{Eqs0A00502}), we have
\begin{equation}\label{Eqs0A005}
\begin{split}
\boldsymbol{G}^H\boldsymbol{\bar{F}}^H\boldsymbol{\bar{F}}\boldsymbol{G} = {M_0}\boldsymbol{\bar {G}}^H\boldsymbol{{I}}_{K}\boldsymbol{\bar {G}}= {M_0}\boldsymbol{\bar {G}}^H\boldsymbol{\bar {G}}= {M_0}\boldsymbol{ {G}}^H\boldsymbol{ {G}}.
\end{split}
\end{equation}
By substituting (\ref{Eqs0A005}) into (\ref{Eqs0A002}), we have
\begin{equation}\label{Eqs0A006}
\begin{split}
  \tilde{p}\left(\boldsymbol{v}\right) = \frac{P}{M}{\boldsymbol{v}^H\boldsymbol{G}^H\boldsymbol{G}\boldsymbol{v}} + \sigma^2= {\boldsymbol{v}^H{\boldsymbol{R}}\boldsymbol{v}} + \sigma^2.
\end{split}
\end{equation}
Thus, the proof is completed.

\bibliographystyle{IEEEtran}
\bibliography{mybib}
\end{document}